\newtheorem{theorem}{Theorem}
\newtheorem{lemma}{Lemma}
\newtheorem{algorithm}{Algorithm}
\newtheorem{remark}{Remark}
\newcommand{\EE}{\mathbb{E}}
\newcommand{\VV}{\mathbb{V}}
\newcommand{\RR}{\mathbb{R}}
\newcommand{\ZZ}{\mathbb{Z}}
\newcommand{\bszero}{\boldsymbol{0}}
\newcommand{\tg}[1]{{\color{black}{#1}}}
\title{Multilevel Monte Carlo estimation of expected information gains}
\author{Takashi Goda\thanks{School of Engineering, University of Tokyo, 7-3-1 Hongo, Bunkyo-ku, Tokyo 113-8656, Japan ({\tt goda@frcer.t.u-tokyo.ac.jp}, {\tt hironaka-tomohiko@g.ecc.u-tokyo.ac.jp}, {\tt takeru-iwamoto735@g.ecc.u-tokyo.ac.jp})}, Tomohiko Hironaka\footnotemark[1], Takeru Iwamoto\footnotemark[1]}
\date{\today}
\begin{document}
\maketitle
\begin{abstract}
The expected information gain is an important quality criterion of Bayesian experimental designs, which measures how much the information entropy about uncertain quantity of interest $\theta$ is reduced on average by collecting relevant data $Y$. However, estimating the expected information gain has been considered computationally challenging since it is defined as a nested expectation with an outer expectation with respect to $Y$ and an inner expectation with respect to $\theta$. In fact, the standard, nested Monte Carlo method requires a total computational cost of $O(\varepsilon^{-3})$ to achieve a root-mean-square accuracy of $\varepsilon$. In this paper we develop an efficient algorithm to estimate the expected information gain by applying a multilevel Monte Carlo (MLMC) method. To be precise, we introduce an antithetic MLMC estimator for the expected information gain and provide a sufficient condition on the data model under which the antithetic property of the MLMC estimator is well exploited such that optimal complexity of $O(\varepsilon^{-2})$ is achieved. Furthermore, we discuss how to incorporate importance sampling techniques within the MLMC estimator to avoid arithmetic underflow. Numerical experiments show the considerable computational cost savings compared to the nested Monte Carlo method for a simple test case and a more realistic pharmacokinetic model.
\\
\textbf{Keywords:} expected information gain, Bayesian experimental design, multilevel Monte Carlo
\end{abstract}

\section{Introduction}
The motivation for this research comes from construction of optimal Bayesian experimental designs, where the so-called \emph{expected information gain} has been often employed as a quality criterion of experimental designs, see for instance \cite{L56,CV95,Ryan03,HM13,LSTW13,BDELT18}. Let $\theta$ be a (possibly multi-dimensional) random variable which represents the uncertain quantity of interest. By collecting relevant data $Y_\xi$ (which is again possibly multi-dimensional) through carrying out some experiments under an experimental setup $\xi$, we expect that the uncertainty of $\theta$ can be reduced. As originally advocated in \cite{L56}, here we measure the uncertainty of $\theta$ by its information entropy. The aim of Bayesian experimental designs is to find an optimal experimental setup $\xi^*$ which maximizes the expected information gain, that is, the expected amount of the information entropy reduction about $\theta$. \tg{If $\xi$ lives in a continuous space, one may want to evaluate the derivative of the expected information gain with respect to $\xi$, instead of the expected information gain itself, in order to search for a maximizer $\xi^*$. If not, however, accurate evaluation of the expected information gain for given $\xi$ plays an essential role in constructing optimal Bayesian experimental designs.}

In what follows, we give a formal definition of the expected information gain for a particular experimental setup $\xi$. The information entropy of $\theta$ before collecting data $Y_\xi$ is given by
\[ -\EE_\theta[\log p(\theta)], \]
where $p(\theta)$ denotes the prior probability density function of $\theta$. On the other hand, after collecting data $Y_\xi$, the conditional information entropy of $\theta$ is
\[ -\EE_{\theta|Y_\xi}[\log p(\theta\,|\,Y_\xi)], \]
where $p(\theta\,|\, Y_\xi)$ denotes the posterior probability density function of $\theta$ given $Y_\xi$. Note that the expectation is now taken with respect to $p(\theta\,|\, Y_\xi)$ instead of $p(\theta)$. Thus the expected conditional information entropy of $\theta$ by collecting data $Y_\xi$ is
\[ \EE_{Y_\xi}\left[-\EE_{\theta|Y_\xi}[\log p(\theta\,|\,Y_\xi)]\right]. \]
The expected information gain is defined by the difference
\begin{align} 
U_\xi & := -\EE_\theta[\log p(\theta)] - \EE_{Y_\xi}\left[-\EE_{\theta|Y_\xi}[\log p(\theta\,|\,Y_\xi)]\right] \notag \\
& = \EE_{Y_\xi}\left[  -\EE_{\theta\,|\, Y_\xi}[\log p(\theta)] + \EE_{\theta|Y_\xi}[\log p(\theta\,|\,Y_\xi)]\right] \notag \\
& = \EE_{Y_\xi}\EE_{\theta\,|\, Y_\xi}\left[ \log \frac{p(\theta\,|\, Y_\xi)}{p(\theta)}\right].\label{eq:eig}
\end{align}
This means that the expected information gain $U_\xi$ measures the average amount of the reduction of the information entropy about $\theta$ by collecting data $Y_\xi$. In \eqref{eq:eig}, the inner expectation appearing in the right-most side is nothing but the Kullback-Leibler divergence between $p(\theta)$ and $p(\theta\,|\, Y_\xi)$. In the context of Bayesian experimental designs, we claim that the data $Y_\xi$ with larger value of $U_\xi$ is more informative about $\theta$ and thus the corresponding experimental design $\xi$ is better. This is why the expected information gain is used as a quality criterion of experimental designs \cite{L56}.

Let us consider the following data model:
\begin{align}\label{eq:data_model}
 Y_\xi = g_\xi(\theta)+\epsilon,
\end{align}
where the function $g_\xi$ represents the deterministic part of the model response which depends on $\theta$ and $\xi$, and $\epsilon$ denotes the stochastic part of the model response, i.e, the measurement error. Typically $\epsilon$ is assumed to be zero-mean Gaussian with covariance matrix $\Sigma_\epsilon$. As considered in \cite{HM13,LSTW13,BDELT18}, this data model can be extended to allow the repetition of experiments as
\[ Y_\xi^{(i)} = g_\xi(\theta)+\epsilon^{(i)}\quad \text{for $i=1,\ldots,N_e$,} \]
where $N_e$ is the number of repetitive experiments and $\epsilon^{(i)}$ are independent and identically distributed (i.i.d.) measurement errors. However, this extended model can be easily rewritten into the form of \eqref{eq:data_model} by concatenating $Y_\xi=(Y_\xi^{(1)},\ldots,Y_\xi^{(N_e)})$, so that we stick to the original model \eqref{eq:data_model} in this paper. 

As an initial but crucial step toward an efficient construction of optimal Bayesian experimental designs, we develop an efficient Monte Carlo algorithm for estimating the expected information gain $U_\xi$ for a given experimental setup $\xi$ in this paper. \tg{Since we shall fix $\xi$ in the rest of this paper, we omit the subscript $\xi$ and simply write $g,Y,U$ instead of $g_\xi,Y_\xi,U_\xi$ when distinguishing different $\xi$'s is not important.} In the next section, we introduce the standard, nested Monte Carlo method as a classical algorithm to estimate $U$, and give a brief review of the relevant literature. Then in Section~\ref{sec:mlmc}, after introducing the concept of a multilevel Monte Carlo (MLMC) method, we construct an MLMC estimator for $U$ as an alternative, more efficient algorithm. We prove under a sufficient condition on the data model that the MLMC estimator can estimate $U$ with a root-mean-square accuracy $\varepsilon$ by the computational cost of optimal $O(\varepsilon^{-2})$. (Here and in what follows, the difference between the noise $\epsilon$ and the accuracy $\varepsilon$ should not be confused.) Recently in \cite{TGH17}, Tsilifis et al.\ considered a lower bound on the expected information gain as a criterion of experimental designs and showed that the same order of computational cost can be achieved by the standard Monte Carlo method to estimate it. Our proposal enables to estimate the expected information gain itself efficiently, which is the main contribution of this paper. Moreover we discuss how to incorporate importance sampling techniques within the MLMC estimator, \tg{which proves to be useful in some applications.} Numerical experiments in Section~\ref{sec:numer} confirm the considerable computational savings compared to the nested Monte Carlo method not only for a simple test case but also for a more realistic pharmacokinetic model adapted from \cite{RDTP14}. \tg{Section~\ref{sec:conclusion} concludes this paper with some remarks on future research directions.}

\section{Nested Monte Carlo}\label{sec:nmc}
The nested Monte Carlo (NMC) method is the most standard approach to estimate the expected information gain \cite{Ryan03,HM13,BDELT18,RCYWW18}. Given the data model \eqref{eq:data_model}, it is straightforward to generate i.i.d.\ random samples of $Y$ given a particular value of $\theta$ and also those of $Y$ itself. Besides, since $Y-g(\theta)$ follows the probability distribution of $\epsilon$, it is easy to compute $p(Y\,|\, \theta)$ for given $\theta$ and $Y$. On the other hand, it is usually hard to generate i.i.d.\ random samples of $\theta$ given a particular value of $Y$ and to compute $p(\theta\,|\, Y)$ and $p(Y)$ for given $\theta$ and $Y$. 

Based on this fact, we use Bayes' theorem
\[ p(\theta\,|\,Y) = \frac{p(\theta)p(Y\,|\,\theta)}{p(Y)} = \frac{p(\theta)p(Y\,|\,\theta)}{\EE_\theta[p(Y\,|\, \theta)]}, \]
to rewrite the expected information gain $U$, defined in \eqref{eq:eig}, into
\begin{align}
U & =  \EE_{Y}\EE_{\theta\,|\, Y}\left[ \log \frac{p(Y\,|\,\theta)}{\EE_\theta[p(Y\,|\, \theta)]}\right] \notag \\
& =  \EE_Y \EE_{\theta|Y}\left[\log p(Y\,|\,\theta) \right] -\EE_Y\left[\log \EE_\theta[p(Y\,|\, \theta)]\right] \notag \\ 
& =  \EE_{\theta}\EE_{Y|\theta}\left[\log p(Y\,|\,\theta) \right] -\EE_Y\left[\log \EE_\theta[p(Y\,|\, \theta)]\right] .\label{eq:eig2}
\end{align}
With this form of $U$, the NMC estimator for the expected information gain is given by
\begin{align}\label{eq:NMC_estimator}
 \frac{1}{N}\sum_{n=1}^{N}\left[ \log p(Y^{(n)}\,|\, \theta^{(n,0)})-\log\left( \frac{1}{M}\sum_{m=1}^{M}p(Y^{(n)}\,|\, \theta^{(n,m)})\right) \right],
\end{align}
for some $M,N>0$, where $\theta^{(n,0)},\theta^{(n,1)},\ldots,\theta^{(n,m)}$ denote i.i.d.\ random samples of $\theta$, and $Y^{(n)}$ denotes a random sample of $Y$ generated conditionally on $\theta^{(n,0)}$.

In \cite{Ryan03}, Ryan showed under some approximations that the bias and the variance of the NMC estimator are of $O(M^{-1})$ and of $O(N^{-1})$, respectively. Since the mean square error of the NMC estimator is given by the sum of the variance and the squared bias, $U$ can be estimated with a root-mean-square accuracy $\varepsilon$ by using $N=O(\varepsilon^{-2})$ and $M=O(\varepsilon^{-1})$ samples. Assuming that each computation of $g$, which is necessary for calculating $p(Y\,|\, \theta)$, can be performed with unit cost, the total computational cost is $N(M+1)=O(\varepsilon^{-3})$.

Much more recently, in \cite{BDELT18}, Beck et al.\ provided a thorough error analysis of the NMC estimator and derived the optimal allocation of $N$ and $M$ for a given $\varepsilon$. In fact, they considered the situation where $g$ cannot be computed exactly and only its discretized approximation $g_h$ with a mesh discretization parameter $h>0$ is available. Here $g_h$ approaches to $g$ as $h$ gets smaller, but at the same time, the computational cost of $g_h$ increases. Therefore, their optimization deals with not only the number of samples $N$ and $M$ but also the parameter $h$. In this paper, we assume that $g$ can be computed exactly, so that dealing with such situations is left open for future works, see Section~\ref{sec:conclusion}.

More importantly, Beck et al.\ incorporated importance sampling based on the Laplace approximation from \cite{LSTW13} within the NMC estimator. This approach is quite useful in reducing the number of inner samples $M$ substantially and also in mitigating the risk of arithmetic underflow. When $p(Y\,|\, \theta)$ (as a function of $\theta$ for a fixed $Y$) is highly concentrated around a certain value of $\theta$, the Monte Carlo estimate of the inner expectation
\[ \frac{1}{M}\sum_{m=1}^{M}p(Y^{(n)}\,|\, \theta^{(n,m)}), \]
appearing in \eqref{eq:NMC_estimator} can be numerically zero. Taking the logarithm of 0 of course returns error. This can happen in practice especially for small $M$. Therefore, applying a change of measure such that most of the samples of $\theta$ are concentrated properly depending on $Y^{(n)}$ is desirable, which is exactly what the Laplace-based importance sampling aims to do. We note, however, that using importance sampling does not improve the order of computational complexity, so that the necessary cost of $O(\varepsilon^{-3})$ remains unchanged.

\section{Multilevel Monte Carlo}\label{sec:mlmc}
\subsection{Basic theory of MLMC}
In order to \tg{reduce} the necessary computational cost to estimate $U$ from $O(\varepsilon^{-3})$ to $O(\varepsilon^{-2})$, we consider applying a multilevel Monte Carlo (MLMC) method \cite{G08,G15}. The MLMC method has already been applied to estimate nested expectations of the form
\[ \EE\left[f \left( \EE[g(X,Y)\,|\, Y]\right) \right], \]
for independent random variables $X$ and $Y$, where an outer expectation is taken with respect to $Y$ and an inner one is taken with respect to $X$, see \cite{BHR15,G15,GG19,GH19}. In particular, the case where $f$ is \tg{twice differentiable} has been briefly discussed in \cite[Section~9]{G15} \tg{based on a Taylor series expansion of $f$}. In this paper we make a rigorous argument when $f$ is a logarithmic function, \tg{for which the remainder term of the Taylor expansion has to be carefully dealt with}.

Before introducing an MLMC estimator for the expected information gain, we give an overview of the MLMC method. Let $P$ be a random output variable which cannot be sampled exactly, and let $P_0,P_1,\ldots $ be a sequence of random variables which approximate $P$ with increasing accuracy but also with increasing cost. The problem here is to estimate $\EE[P]$ efficiently. 

For $L\in \ZZ_{>0}$ we have the following telescoping sum
\begin{align}\label{eq:tele_sum}
 \EE[P_L] = \EE[P_0]+\sum_{\ell=1}^{L}\EE[P_\ell-P_{\ell-1}].
\end{align}
The standard Monte Carlo method estimates the left-hand side directly by
\begin{align}\label{eq:SMC}
 Z_{\text{MC}}=\frac{1}{N}\sum_{i=1}^{N}P_L^{(i)}. 
\end{align}
The mean square error of $Z_{\text{MC}}$ is given by the sum of variance \tg{$\VV$} and squared bias:
\begin{align}\label{eq:SMC_mse}
\EE[( Z_{\text{MC}}-\EE[P])^2] = \frac{\VV[P_L]}{N}+\left( \EE[P_L-P]\right)^2.
\end{align}
The MLMC method, on the other hand, independently estimates each term on the right-hand side of \eqref{eq:tele_sum}. In general, if we have a sequence of random variables $Z_0,Z_1,\ldots$ which satisfy $\EE[Z_0]=\EE[P_0]$ and $\EE[Z_\ell]=\EE[P_\ell-P_{\ell-1}]$ for $\ell\in \ZZ_{>0}$, the MLMC estimator is given by
\begin{align}\label{eq:MLMC}
Z_{\text{MLMC}}=\sum_{\ell=0}^{L}\frac{1}{N_\ell}\sum_{i=1}^{N_\ell}Z_\ell^{(i)}.
\end{align}
The mean square error of $Z_{\text{MLMC}}$ is 
\begin{align}\label{eq:MLMC_mse}
 \EE[( Z_{\text{MLMC}}-\EE[P])^2] = \sum_{\ell=0}^{L}\frac{\VV[Z_\ell]}{N_\ell}+\left( \EE[P_L-P]\right)^2.
\end{align}
For the same underlying stochastic sample, $P_\ell$ and $P_{\ell-1}$ can be well correlated and thus $\VV[Z_\ell]$ is expected to get smaller as the level $\ell$ increases. This means that, in order to estimate $\EE[Z_\ell]$ efficiently, the necessary number of samples $N_\ell$ decreases as $\ell$ increases, and, as a consequence, most of the number of samples are allocated on smaller levels for estimating $\EE[P_L]$. Since the cost for each computation of $Z_\ell$ is assumed to be cheaper for smaller $\ell$, the overall computational cost can be significantly reduced compared to the standard Monte Carlo method.

In his seminal work \cite{G08}, Giles made this observation explicit as follows, see also a recent review  \cite{G15}:
\begin{theorem}\label{thm:basic_MLMC}
Let $P$ be a random variable and let $P_\ell$ denote the corresponding level $\ell$ approximation of $P$. If there exist independent random variables $Z_\ell$ with expected cost $C_\ell$ and variance $V_\ell$, and positive constants $\alpha,\beta,\gamma,c_1,c_2,c_3$ such that $\alpha\geq \min(\beta,\gamma)/2$ and
\begin{enumerate}
\item (decay of bias) $|\EE[P_\ell-P]| \leq c_12^{-\alpha \ell}$,
\item (proper coupling) $\displaystyle \EE[Z_\ell] =\begin{cases} \EE[P_0] & \ell=0, \\ \EE[P_\ell-P_{\ell-1}] & \ell>0, \end{cases}$
\item (decay of variance) $V_\ell \leq c_22^{-\beta \ell}$,
\item (growth of cost) $C_\ell \leq c_32^{\gamma \ell}$,
\end{enumerate}
then there exists a positive constant $c_4$ such that for any $\varepsilon<\tg{\exp(-1)}$ there are $L$ and $N_\ell$ for which the MLMC estimator \eqref{eq:MLMC} has a mean square error less than $\varepsilon^2$ with a computational complexity $C$ with bound
\[ \EE[C]\leq \begin{cases} c_4\varepsilon^{-2} & \beta>\gamma, \\ c_4\varepsilon^{-2}(\log \varepsilon)^2 & \beta=\gamma, \\ c_4\varepsilon^{-2-(\gamma-\beta)/\alpha} & \beta<\gamma. \end{cases} \]
\end{theorem}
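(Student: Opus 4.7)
The plan is to exploit the MSE decomposition \eqref{eq:MLMC_mse} by splitting the budget $\varepsilon^2$ into a squared-bias part and a variance part. First, I would use the decay-of-bias hypothesis to pick the number of levels so that the bias contribution is at most $\varepsilon^2/2$, namely
\[ L = \left\lceil \frac{1}{\alpha}\log_2\!\left(\frac{\sqrt{2}\,c_1}{\varepsilon}\right)\right\rceil. \]
The restriction $\varepsilon<\exp(-1)$ guarantees that $L\geq 1$ is a well-defined positive integer, and one has $L=O(\log(1/\varepsilon))$ together with $2^L\lesssim \varepsilon^{-1/\alpha}$.

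For this fixed $L$, the remaining task is to allocate the per-level sample sizes $N_\ell$ so as to minimize the total cost $\sum_\ell N_\ell C_\ell$ under the variance constraint $\sum_\ell V_\ell/N_\ell\leq \varepsilon^2/2$. A standard Lagrange-multiplier argument (or an application of the Cauchy--Schwarz inequality) on the relaxed continuous problem yields the optimal choice $N_\ell \propto \sqrt{V_\ell/C_\ell}$, and after rounding up I would take
\[ N_\ell=\left\lceil 2\varepsilon^{-2}\sqrt{V_\ell/C_\ell}\sum_{k=0}^{L}\sqrt{V_k C_k}\right\rceil. \]
Substituting this back gives the cost bound
\[ \EE[C]\leq \sum_{\ell=0}^{L} C_\ell + 2\varepsilon^{-2}\left(\sum_{\ell=0}^{L}\sqrt{V_\ell C_\ell}\right)^2. \]

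The three regimes then follow by inserting the hypotheses into $\sqrt{V_\ell C_\ell}\leq \sqrt{c_2c_3}\,2^{(\gamma-\beta)\ell/2}$ and evaluating the resulting geometric sum: when $\beta>\gamma$ it is $O(1)$, when $\beta=\gamma$ it equals $L+1=O(\log(1/\varepsilon))$, and when $\beta<\gamma$ it is dominated by the last term $2^{(\gamma-\beta)L/2}\lesssim \varepsilon^{-(\gamma-\beta)/(2\alpha)}$. Squaring this sum and multiplying by $\varepsilon^{-2}$ produces precisely the three stated complexities $\varepsilon^{-2}$, $\varepsilon^{-2}(\log\varepsilon)^2$, and $\varepsilon^{-2-(\gamma-\beta)/\alpha}$.

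The delicate point, and the reason for the hypothesis $\alpha\geq \min(\beta,\gamma)/2$, is in verifying that the ``rounding'' term $\sum_{\ell=0}^{L} C_\ell\lesssim 2^{\gamma L}\lesssim \varepsilon^{-\gamma/\alpha}$ arising from the ceiling in the definition of $N_\ell$ does not dominate the main contribution: this requires $\gamma/\alpha\leq 2$ in the first regime and $\beta/\alpha\leq 2$ in the other two, both encapsulated by $\alpha\geq \min(\beta,\gamma)/2$. Carrying the ceiling errors through the variance constraint and checking that these lower-order pieces never overwhelm the principal term in any of the three regimes is the one place where I would expect real bookkeeping care in writing out the full proof.
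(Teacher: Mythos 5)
Your proposal is correct and is essentially the standard argument: the paper itself gives no proof of this theorem but defers to Giles \cite{G08,G15}, and your outline (choosing $L\asymp\alpha^{-1}\log_2(1/\varepsilon)$ to halve the bias budget, the Lagrange/Cauchy--Schwarz allocation $N_\ell\propto\sqrt{V_\ell/C_\ell}$, the three geometric-sum regimes, and the role of $\alpha\geq\min(\beta,\gamma)/2$ in keeping the $\sum_\ell C_\ell$ rounding cost subdominant) reproduces exactly that proof. The only point to be careful about when writing it out is that $\varepsilon<\exp(-1)$ is also what makes $(\log\varepsilon)^2\geq 1$ so the rounding term is absorbed in the $\beta=\gamma$ case, which you have implicitly but should state.
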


\begin{remark}\label{rem:smc}
As discussed for instance in \cite[Section~2.1]{GG19}, a computational complexity for the standard Monte Carlo estimator to have a mean square error less than $\varepsilon^2$ is of $O(\varepsilon^{-2-\gamma/\alpha})$. Thus regardless of the values of $\beta$ and $\gamma$, the MLMC estimator has an asymptotically better complexity bound than the standard Monte Carlo estimator.
\end{remark}

\subsection{MLMC estimator for expected information gains}
Here we introduce an MLMC estimator for the expected information gain. First let us define a random output variable
\[ P := \log p(Y\,|\,\theta) - \log \EE_\theta[p(Y\,|\, \theta)], \]
where $Y$ is distributed conditionally on the random variable $\theta$ of the first term. It is obvious that $P$ cannot be computed exactly because of the expectation $\EE_\theta[p(Y\,|\, \theta)]$ appearing in the second term. However, we can introduce a sequence of approximations $P_0,P_1,\ldots$ of $P$ with increasing accuracy but also with increasing cost as follows:
\begin{align*}
 P_\ell & = \log p(Y \,|\, \theta)-\log\left( \frac{1}{M_\ell}\sum_{m=1}^{M_\ell}p(Y\,|\, \theta^{(m)})\right) \\
 & =: \log p(Y \,|\, \theta) - \log \overline{p(Y \,|\, \cdot )}^{M_\ell},
\end{align*}
for an increasing sequence $M_0<M_1<\ldots$ such that $M_\ell\to \infty$ as $\ell\to \infty$. That is, $P_\ell$ is the standard Monte Carlo estimator of $P$ using $M_\ell$ random samples of $\theta$. Thus we have $\lim_{\ell\to \infty}\EE[P_\ell]=\EE[P]$. Note that the standard, nested Monte Carlo estimator \eqref{eq:NMC_estimator} is essentially the same as \eqref{eq:SMC} with $P_L$ given as above for a fixed $L$. 

In what follows, let $M_\ell:=M_02^{\ell}$ for some $M_0\in \ZZ_{>0}$ for all $\ell\geq 0$, i.e., we consider a geometric progression for $M_\ell$. Then a sequence of corrections $Z_0,Z_1,\ldots$ is defined as follows: $Z_0$ is the same as $P_0$, given by
\[ Z_0 = \log p(Y \,|\, \theta) - \log \overline{p(Y \,|\, \cdot )}^{M_0}. \]
For $\ell>0$, the simplest form of $Z_\ell$ is 
\[ Z_\ell = P_\ell-P_{\ell-1}=\log \overline{p(Y \,|\, \cdot )}^{M_{\ell-1}}-\log \overline{p(Y \,|\, \cdot )}^{M_\ell}, \]
where the first $M_{\ell-1}$ random samples of $\theta$ used in the second term is also used in the first term.
However, according to \cite{GS14,BHR15,G15,GG19}, we can consider a better ``tight coupling'' of $P_\ell$ and $P_{\ell-1}$. Namely, the set of $M_02^{\ell}$ random samples of $\theta$ used to compute $P_{\ell}$ is divided into two disjoint sets of $M_02^{\ell-1}$ samples to compute two realizations of $P_{\ell-1}$, denoted by $P_{\ell-1}^{(a)}$ and $P_{\ell-1}^{(b)}$, respectively. This way we define $Z_\ell$ by
\begin{align}
 Z_\ell & = P_\ell-\frac{1}{2}\left[ P_{\ell-1}^{(a)}+P_{\ell-1}^{(b)}\right] \notag \\
 & = \log p(Y\,|\,\theta) - \log \overline{p(Y \,|\, \cdot )}^{M_02^{\ell}} \notag \\
 & \quad -\frac{1}{2}\left[ \log p(Y\,|\,\theta) - \log \overline{p(Y \,|\, \cdot )}^{(a)}+\log p(Y\,|\,\theta) - \log \overline{p(Y \,|\, \cdot )}^{(b)} \right] \notag \\
 & = \frac{1}{2}\left[ \log \overline{p(Y \,|\, \cdot )}^{(a)}+\log \overline{p(Y \,|\, \cdot )}^{(b)}\right] - \log \overline{p(Y \,|\, \cdot )} ,\label{eq:MLMC_correction}
\end{align}
where 
\begin{itemize}
\item $\overline{p(Y \,|\, \cdot )}$ denotes an average of $p(Y\,|\, \theta)$ over $M_02^{\ell}$ random samples of $\theta$ (note that we omit the superscript $M_02^{\ell}$ since it is clear from the level of $Z_{\ell}$);
\item $\overline{p(Y \,|\, \cdot )}^{(a)}$ denotes an average of $p(Y\,|\, \theta)$ over the first $M_02^{\ell-1}$ random samples of $\theta$ used in $\overline{p(Y \,|\, \cdot )}$;
\item $\overline{p(Y \,|\, \cdot )}^{(b)}$ denotes an average of $p(Y\,|\, \theta)$ over the second $M_02^{\ell-1}$ random samples of $\theta$ used in $\overline{p(Y \,|\, \cdot )}$,
\end{itemize}
for a randomly generated $Y$.
Because of the independence of $P_{\ell-1}^{(a)}$ and $P_{\ell-1}^{(b)}$, we see that $\EE[Z_\ell] = \EE[P_\ell-P_{\ell-1}]$. Moreover, it is important that the following ``antithetic'' property of $Z_{\ell}$ holds:
\begin{align}\label{eq:MLMC_antithetic}
 \frac{1}{2}\left[ \overline{p(Y \,|\, \cdot )}^{(a)}+\overline{p(Y \,|\, \cdot )}^{(b)}\right]=\overline{p(Y \,|\, \cdot )}.
\end{align}
Due to the concavity of $\log$, this $Z_\ell$ is always non-positive when $\ell\geq 1$. 

In this paper, we always consider the latter definition of $Z_\ell$ for $\ell>0$. Our MLMC estimator for the expected information gain is given by \eqref{eq:MLMC} for $L\in \ZZ_{>0}$ and $N_0,\ldots,N_L\in \ZZ_{>0}$ into which the above $Z_\ell$ is substituted. It is already clear from the construction of $Z_\ell$ that the parameter $\gamma$ in Theorem~\ref{thm:basic_MLMC} should be \tg{set to $\gamma=1$}.

\subsection{MLMC variance analysis}
In this subsection we prove $\beta>\gamma$ for $Z_\ell$ defined in \eqref{eq:MLMC_correction}, meaning that our MLMC estimator is in the first regime of Theorem~\ref{thm:basic_MLMC}, so that the total computational complexity is $O(\varepsilon^{-2})$.

In order to prove the main theorem below, we need the following result.

\begin{lemma}\label{lem:deviation}
Let $X$ be a random variable with zero mean, and let $\overline{X}_N$ be an average of $N$ i.i.d.\ samples of $X$. If $\EE[|X|^p]$ is finite for $p\geq 2$, there exists a constant $C_p$ depending only on $p$ such that
\[ \EE[|\overline{X}_N|^p] \leq C_p \frac{\EE[|X|^p]}{N^{p/2}} .\]
\end{lemma}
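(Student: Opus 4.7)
The plan is to invoke a classical moment inequality for sums of i.i.d.\ centered random variables, the Marcinkiewicz--Zygmund inequality, which states that there exists a constant $B_p$ depending only on $p\geq 1$ such that, for i.i.d.\ zero-mean $X_1,\ldots,X_N$ distributed as $X$,
\[ \EE\left[ \left| \sum_{i=1}^{N}X_i\right|^p\right] \leq B_p\, \EE\left[ \left( \sum_{i=1}^{N}X_i^2\right)^{p/2}\right]. \]
Since $N\overline{X}_N=\sum_{i=1}^{N}X_i$, the task reduces to bounding the right-hand side by $N^{p/2}\EE[|X|^p]$, after which dividing through by $N^p$ produces the claimed estimate with $C_p=B_p$.

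To handle the right-hand side, I would use the convexity of $t\mapsto t^{p/2}$ on $[0,\infty)$, which holds precisely because $p\geq 2$. Applying Jensen's inequality to the arithmetic mean of $X_1^2,\ldots,X_N^2$ gives
\[ \left( \frac{1}{N}\sum_{i=1}^{N}X_i^2\right)^{p/2}\leq \frac{1}{N}\sum_{i=1}^{N}|X_i|^p, \]
so that
\[ \EE\left[ \left( \sum_{i=1}^{N}X_i^2\right)^{p/2}\right] = N^{p/2}\,\EE\left[ \left( \frac{1}{N}\sum_{i=1}^{N}X_i^2\right)^{p/2}\right]\leq N^{p/2-1}\sum_{i=1}^{N}\EE[|X_i|^p] = N^{p/2}\,\EE[|X|^p]. \]
Combining this with the Marcinkiewicz--Zygmund inequality and dividing by $N^p$ yields the lemma.

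There is really no hard step here; Marcinkiewicz--Zygmund is used as a black box, and the only substantive manipulation is the Jensen bound, which is where the hypothesis $p\geq 2$ enters. If one prefers to avoid citing Marcinkiewicz--Zygmund, the same conclusion follows from Rosenthal's inequality, whose i.i.d.\ form reads
\[ \EE\left[ \left| \sum_{i=1}^{N}X_i\right|^p\right] \leq C_p \max\!\left( N\,\EE[|X|^p],\, \left( N\,\EE[X^2]\right)^{p/2}\right), \]
after noting that $N\leq N^{p/2}$ for $p\geq 2$ and that $(\EE[X^2])^{p/2}\leq \EE[|X|^p]$ by Lyapunov's inequality; both branches are then controlled by $N^{p/2}\EE[|X|^p]$. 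Either route delivers the stated bound with a constant $C_p$ that depends only on $p$, as required.
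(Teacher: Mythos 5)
Your argument is correct: Marcinkiewicz--Zygmund reduces the problem to bounding $\EE[(\sum_i X_i^2)^{p/2}]$, and the Jensen step (valid since $t\mapsto t^{p/2}$ is convex for $p\geq 2$) closes the gap, as does the Rosenthal alternative. The paper itself offers no proof but merely cites \cite[Lemma~1]{GG19}, and the argument you give is exactly the standard proof of that cited result, so there is nothing to add.
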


\begin{proof}
See \cite[Lemma~1]{GG19}.
\end{proof}

Now we prove:
\begin{theorem}\label{thm:main} If there exist $p, q> 2$ with $(p-2)(q-2)\geq 4$ such that 
\[ \EE_{\theta,Y}\left[\left|\frac{p(Y\,|\,\theta)}{p(Y)}\right|^p\right] <\infty\quad \text{and} \quad \EE_{\theta,Y}\left[\left|\log \frac{p(Y\,|\,\theta)}{p(Y)}\right|^q\right]<\infty, \]
respectively, we have
\[ \EE[|Z_\ell|] =O(2^{-\min(\frac{p(q-1)}{2q}, 1)\ell})\quad \text{and}\quad \VV[Z_{\ell}]=O(2^{- \min(\frac{p(q-2)}{2q}, 2) \ell}). \]
\end{theorem}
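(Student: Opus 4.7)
My approach is to condition on $Y$, pass to scaled quantities, and split the sample space according to whether the sample averages are close to their limit. With $\mu := p(Y)$ and $R := p(Y\,|\,\theta)/p(Y)$, write $U := \overline{p(Y\,|\,\cdot)}^{(a)}/\mu$ and $V := \overline{p(Y\,|\,\cdot)}^{(b)}/\mu$; conditionally on $Y$, the variables $U$ and $V$ are independent averages of $M_\ell/2$ i.i.d.\ copies of $R$, each of conditional mean one. The correction becomes
\[ Z_\ell \;=\; \tfrac{1}{2}\log U + \tfrac{1}{2}\log V - \log\tfrac{U+V}{2}, \]
and the two hypotheses read $\EE_{\theta,Y}[R^{p}]<\infty$ and $\EE_{\theta,Y}[|\log R|^{q}]<\infty$. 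I would then introduce the ``good'' event $G := \{|U-1|\leq 1/2,\;|V-1|\leq 1/2\}$ and bound the contributions to $\EE[|Z_\ell|]$ and $\EE[Z_\ell^{2}]$ from $G$ and $G^{c}$ separately.

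On $G$, second-order Taylor expansion of $\log(1+x)$ around $x=0$ with uniformly bounded cubic remainder leads, after the linear pieces cancel and the quadratic pieces collapse via the antithetic identity \eqref{eq:MLMC_antithetic}, to
\[ Z_\ell \;=\; -\frac{(U-V)^{2}}{8} \;+\; O\bigl(|U-1|^{3} + |V-1|^{3}\bigr). \]
This is the step where the antithetic coupling actually pays off. Consequently $|Z_\ell|\bsone_{G} \lesssim (U-V)^{2}$, and Lemma~\ref{lem:deviation} at exponent $2$ yields $\EE[|Z_\ell|\bsone_{G}] = O(M_\ell^{-1})$. For the second moment, conditional independence of $U$ and $V$ given $Y$ produces the cross term $\EE[(U-1)^{2}(V-1)^{2}]=O(M_\ell^{-2})$, while the pure fourth-moment contributions $\EE[(U-1)^{4}\bsone_{G}]$ are handled either by Lemma~\ref{lem:deviation} directly (when $p\geq 4$) or by the on-event truncation $(U-1)^{4}\bsone_{G} \leq (1/2)^{4-p}|U-1|^{p}$ (when $p<4$), giving $\EE[Z_\ell^{2}\bsone_{G}] = O(M_\ell^{-\min(p/2,\,2)})$.

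For the complementary event, Markov's inequality together with Lemma~\ref{lem:deviation} at exponent $p$ gives $\PP(G^{c}) \leq C\,M_\ell^{-p/2}$. To control the $q$-th moment of $|Z_\ell|$, I would use the concavity of $\log$ in the pointwise form $|\log U| \leq 2\max_{i}|\log R_{i}^{(a)}|$, together with the analogues for $V$ and $(U+V)/2$, to obtain $\EE[|Z_\ell|^{q}] \leq C\,M_\ell\,\EE_{\theta,Y}[|\log R|^{q}] < \infty$. H\"older's inequality with exponents $q$ and $q/(q-1)$ then bounds the contribution of $G^{c}$ to $\EE[|Z_\ell|]$ at the claimed rate $2^{-p(q-1)\ell/(2q)}$, and the analogue with $Z_\ell^{2}$ and exponent $q/2$ yields the matching contribution to $\EE[Z_\ell^{2}]$ at rate $2^{-p(q-2)\ell/(2q)}$. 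Summing the good- and bad-event contributions and substituting $M_\ell = M_{0}\,2^{\ell}$ produces the stated min-of-two rates; the hypothesis $(p-2)(q-2)\geq 4$ is exactly what makes the variance exponent exceed the cost exponent $\gamma=1$, placing the estimator in the optimal regime of Theorem~\ref{thm:basic_MLMC}. The main obstacle is precisely this bad-event step: because $|Z_\ell|$ admits no pointwise bound, one must carefully trade off the $p$-th moment control of $R$ (which makes $G^{c}$ rare) against the $q$-th moment control of $\log R$ (which bounds the typical size of $|Z_\ell|$ there) via the H\"older inequality.
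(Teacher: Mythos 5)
Your architecture (antithetic cancellation of the linear terms, a good/bad event split, Lemma~\ref{lem:deviation} on the good event, H\"older on the bad event) is close in spirit to the paper's proof, which avoids the event split by writing $Z_\ell$ as a combination of terms $\log x - x + 1$ and using the pointwise factorization $|\log x - x + 1|\le |x-1|^r\max(-\log x,1)$ before a single application of H\"older. However, your bad-event estimate has a genuine quantitative gap. You establish $\EE[|Z_\ell|^q]\le C\,M_\ell\,\EE[|\log R|^q]$, a bound that \emph{grows} linearly in $M_\ell$, and then assert that H\"older delivers the claimed rates. Carrying the $M_\ell$ factor through actually gives
\[
\EE[Z_\ell^2\bsone_{G^c}]\le\bigl(\EE[|Z_\ell|^q]\bigr)^{2/q}\bigl(\PP(G^c)\bigr)^{(q-2)/q}
= O\bigl(M_\ell^{2/q}\cdot M_\ell^{-p(q-2)/(2q)}\bigr)
= O\bigl(M_\ell^{-(p(q-2)-4)/(2q)}\bigr),
\]
not $O(M_\ell^{-p(q-2)/(2q)})$: you have silently dropped the factor $M_\ell^{2/q}$. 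This is not cosmetic. For $p=q$ slightly above $4$ (so $(p-2)(q-2)>4$), the theorem's exponent $p(q-2)/(2q)=(p-2)/2$ exceeds $1$, whereas your exponent $(p(q-2)-4)/(2q)$ falls below $1$, so your argument no longer establishes $\beta>\gamma$ and the $O(\varepsilon^{-2})$ complexity conclusion is lost in that regime. (The same slippage occurs for the first moment but happens to be harmless there, since the hypothesis forces $(p(q-1)-2)/(2q)>1$ while the claimed exponent is capped at $1$.)

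The repair is to obtain $q$-th moment bounds that are uniform in $M_\ell$, which is where the paper's choice of working with $\log x - x + 1$ matters: the negative part of the logarithm is controlled uniformly via concavity and Jensen, $\EE[(\max(-\log U,1))^{q}]\le\EE[|\log R|^{q}]+1=O(1)$, while the positive part (which cannot be bounded this way) is absorbed into the polynomial factor $|U-1|^r$ governed by the $p$-th moment of $R$. Your crude bound $|\log U|\le\max_i|\log R_i^{(a)}|$ cannot be improved to $O(1)$ in general, so the bad-event step needs to be restructured along these lines rather than patched. A minor separate point: on $G$ your display $|Z_\ell|\bsone_{G}\lesssim(U-V)^{2}$ omits the cubic remainder and should read $\lesssim(U-1)^{2}+(V-1)^{2}$; this still yields $\EE[|Z_\ell|\bsone_{G}]=O(M_\ell^{-1})$, so that part of the argument survives.
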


\begin{proof}
Using the antithetic property \eqref{eq:MLMC_antithetic} for a particular value of $Y$, we have
\begin{align*}
Z_\ell & = \frac{1}{2}\left[ \log \frac{\overline{p(Y \,|\, \cdot )}^{(a)}}{p(Y)}+\log \frac{\overline{p(Y \,|\, \cdot )}^{(b)}}{p(Y)}\right] - \log \frac{\overline{p(Y \,|\, \cdot )}}{p(Y)} \\
& \quad -\frac{1}{2}\left[ \frac{\overline{p(Y \,|\, \cdot )}^{(a)}}{p(Y)}+\frac{\overline{p(Y \,|\, \cdot )}^{(b)}}{p(Y)}\right]+\frac{\overline{p(Y \,|\, \cdot )}}{p(Y)}\\
& = \frac{1}{2}\left[ \log \frac{\overline{p(Y \,|\, \cdot )}^{(a)}}{p(Y)}-\frac{\overline{p(Y \,|\, \cdot )}^{(a)}}{p(Y)}+1\right] \\
& \quad +\frac{1}{2}\left[ \log \frac{\overline{p(Y \,|\, \cdot )}^{(b)}}{p(Y)}-\frac{\overline{p(Y \,|\, \cdot )}^{(b)}}{p(Y)}+1\right] - \left[ \log \frac{\overline{p(Y \,|\, \cdot )}}{p(Y)}-\frac{\overline{p(Y \,|\, \cdot )}}{p(Y)}+1\right] .
\end{align*}
Applying Jensen's inequality gives
\begin{align}
|Z_\ell|^2 & \leq \left| \log \frac{\overline{p(Y \,|\, \cdot )}^{(a)}}{p(Y)}-\frac{\overline{p(Y \,|\, \cdot )}^{(a)}}{p(Y)}+1\right|^2 \notag \\
& \quad +\left| \log \frac{\overline{p(Y \,|\, \cdot )}^{(b)}}{p(Y)}-\frac{\overline{p(Y \,|\, \cdot )}^{(b)}}{p(Y)}+1\right|^2 + 2\left| \log \frac{\overline{p(Y \,|\, \cdot )}}{p(Y)}-\frac{\overline{p(Y \,|\, \cdot )}}{p(Y)}+1\right|^2. \label{eq:bound_zl}
\end{align}
In what follows, we show a bound on the expectation of the last term of \eqref{eq:bound_zl}.

It is elementary to check that the following inequality holds
\[ |\log x-x+1|\leq  |x-1|^r \max\left(-\log x, 1\right), \]
for any $x>0$ and any $1\leq r\leq 2$. Thus it follows from H\"{o}lder's inequality that
\begin{align}
& \EE\left[ \left|\log \frac{\overline{p(Y \,|\, \cdot )}}{p(Y)}-\frac{\overline{p(Y \,|\, \cdot )}}{p(Y)}+1\right|^2 \right] \notag \\
& \leq \EE\left[ \left| \frac{\overline{p(Y \,|\, \cdot )}}{p(Y)}-1\right|^{2r} \left(\max\left( -\log \frac{\overline{p(Y \,|\, \cdot )}}{p(Y)},1\right) \right)^2 \right] \notag \\
& \leq \left( \EE\left[ \left| \frac{\overline{p(Y \,|\, \cdot )}}{p(Y)}-1\right|^{2sr}\right] \right)^{1/s}  \left( \EE\left[\left(\max\left( -\log \frac{\overline{p(Y \,|\, \cdot )}}{p(Y)},1\right) \right)^{2t} \right]\right)^{1/t},\label{eq:holder_bound}
\end{align}
for any H\"{o}lder conjugates $s,t\geq 1$ such that $1/s+1/t=1$.

For the first factor of \eqref{eq:holder_bound}, we recall that $\overline{p(Y \,|\, \cdot )}$ is an unbiased Monte Carlo estimate of $p(Y)$ using $M_02^{\ell}$ samples of $\theta$. Hence, as long as $2sr\leq p$, it follows from Lemma~\ref{lem:deviation} that 
\[ \EE\left[ \left| \frac{\overline{p(Y \,|\, \cdot )}}{p(Y)}-1\right|^{2sr}\right] \leq \frac{C_{2sr}}{(M_02^{\ell})^{sr}}\EE\left[ \left| \frac{p(Y\,|\, \theta)}{p(Y)}-1\right|^{2sr}\right] .\]
For the second factor of \eqref{eq:holder_bound}, we recall that the function $f(x)=\max\left(-\log x, 1\right)>0$ is convex. Thus, applying Jensen's inequality twice, we have
\begin{align*}
\left(\max\left( -\log \frac{\overline{p(Y \,|\, \cdot )}}{p(Y)},1\right) \right)^{2t} & \leq \left(\frac{1}{M_02^{\ell}}\sum_{m=1}^{M_02^{\ell}}\max\left( -\log \frac{p(Y \,|\, \theta^{(m)})}{p(Y)}, 1\right) \right)^{2t} \\
& \leq \frac{1}{M_02^{\ell}}\sum_{m=1}^{M_02^{\ell}}\left( \max\left( -\log \frac{p(Y \,|\, \theta^{(m)})}{p(Y)}, 1\right) \right)^{2t} \\
& \leq \frac{1}{M_02^{\ell}}\sum_{m=1}^{M_02^{\ell}}\left( \left|\log \frac{p(Y \,|\, \theta^{(m)})}{p(Y)}\right|^{2t}+1\right) .
\end{align*}
Thus we obtain
\[ \EE\left[\left(\max\left( -\log \frac{\overline{p(Y \,|\, \cdot )}}{p(Y)},1\right) \right)^{2t} \right] \leq \EE\left[ \left|\log \frac{p(Y \,|\, \theta)}{p(Y)}\right|^{2t}\right]  +1, \]
as long as $2t\leq q$. The H\"{o}lder conjugates $s$ and $t$ and the exponent $r$ can be chosen as
\[ s=\frac{q}{q-2},\quad t=\frac{q}{2}\quad \text{and}\quad r=\min\left( \frac{p(q-2)}{2q}, 2\right), \]
respectively. Here the assumption $(p-2)(q-2)\geq 4$ of the theorem is required to ensure $r\geq 1$. Altogether the expectation of the last term of \eqref{eq:bound_zl} is bounded above by
\begin{align*}
& \EE\left[ \left|\log \frac{\overline{p(Y \,|\, \cdot )}}{p(Y)}-\frac{\overline{p(Y \,|\, \cdot )}}{p(Y)}+1\right|^2 \right] \\
& \leq \frac{C_{2sr}^{1/s}}{(M_02^{\ell})^{r}}\left(\EE\left[ \left| \frac{p(Y\,|\, \theta)}{p(Y)}-1\right|^{2sr}\right]\right)^{1/s}\left(\EE\left[ \left|\log \frac{p(Y \,|\, \theta)}{p(Y)}\right|^{2t}\right] +1\right)^{1/t}.
\end{align*}
Since similar bounds exist for the expectations of the first and second terms of \eqref{eq:bound_zl}, we obtain the bound on $\VV[Z_\ell]$ of order $2^{-r\ell}$. A bound on $\EE[|Z_\ell|]$ can be shown similarly.
\end{proof}

\begin{remark}
The result on $\EE[|Z_\ell|]$ implies that the parameter $\alpha$ appearing in Theorem~\ref{thm:basic_MLMC} equals $\min(\frac{p(q-1)}{2q}, 1)$, since
\[ |\EE[P_\ell-P]|=\left|\sum_{\ell'=\ell+1}^{\infty}\EE[Z_{\ell'}]\right|\leq \sum_{\ell'=\ell+1}^{\infty}\EE[|Z_{\ell'}|]=O(2^{-\min(\frac{p(q-1)}{2q}, 1)\ell}). \]
The result on $\VV[|Z_\ell|]$ directly means that the parameter $\beta$ equals $\min(\frac{p(q-2)}{2q}, 2)$. As we have $\gamma=1$, our MLMC estimator is in the regime $\beta>\gamma$ whenever $(p-2)(q-2)> 4$. As a result,  we now know that the MLMC estimator achieves the computational complexity of optimal $O(\varepsilon^{-2})$ for estimating the expected information gain $U$. As mentioned in Remark~\ref{rem:smc}, the standard (nested, in this case) Monte Carlo method only achieves the complexity of $O(\varepsilon^{-2-\gamma/\alpha})$. Since $\alpha=\gamma=1$ whenever $(p-2)(q-1)\geq 2$, we recover the results from \cite{Ryan03,BDELT18}.
\end{remark}

\subsection{Incorporating importance sampling}
In practice, it might be often the case that $p(Y\,|\, \theta)$, as a function of $\theta$ for a fixed $Y$, is highly concentrated around a certain value of $\theta$. If i.i.d.\ random samples of $\theta$ are distributed outside the concentrated region, the Monte Carlo estimates  $\overline{p(Y \,|\, \cdot )}^{(a)}, \overline{p(Y \,|\, \cdot )}^{(b)}$ and $\overline{p(Y \,|\, \cdot )}$ can be numerically zero. This issue is called \emph{arithmetic underflow} \cite{BDELT18}. \tg{This occurs as errors show when numerically taking the logarithm of 0 for $Z_\ell$.} To avoid this issue, we incorporate importance sampling into the MLMC estimator.

Let $q(\theta\,|\, Y)$ be an importance distribution of $\theta$ which satisfies $q(\theta\,|\,Y)>0$ whenever $p(\theta)>0$. For a given $Y$, we have
\[ p(Y)= \EE_\theta[p(Y\,|\, \theta)] = \EE_{\theta\sim q(\cdot\,|\, Y)}\left[ \frac{p(Y\,|\, \theta)p(\theta)}{q(\theta\,|\,Y)}\right], \]
so that the expected information gain $U$ becomes
\[ U = \EE_\theta \left[\EE_{Y|\theta}\left[\log p(Y\,|\, \theta) \right]\right] -\EE_Y\left[\log \EE_{\theta\sim q(\cdot\,|\, Y)}\left[ \frac{p(Y\,|\, \theta)p(\theta)}{q(\theta\,|\,Y)}\right]\right] . \]
The corresponding random variables $P_\ell$ and $Z_\ell$ used in the MLMC estimator are replaced by
\begin{align*}
 \hat{P}_\ell & = \log p(Y \,|\, \theta) - \log \overline{\left(\frac{p(Y \,|\, \cdot )p(\cdot)}{q(\cdot \,|\,Y)}\right)}^{M_\ell}, \\
 \hat{Z}_0  & = \log p(Y \,|\, \theta) - \log \overline{\left(\frac{p(Y \,|\, \cdot )p(\cdot)}{q(\cdot \,|\,Y)}\right)}^{M_0}, \\
 \hat{Z}_\ell & = \frac{1}{2}\left[ \log \overline{\left(\frac{p(Y \,|\, \cdot )p(\cdot)}{q(\cdot \,|\,Y)}\right)}^{(a)}+\log \overline{\left(\frac{p(Y \,|\, \cdot )p(\cdot)}{q(\cdot \,|\,Y)}\right)}^{(b)}\right] - \log \overline{\left(\frac{p(Y \,|\, \cdot )p(\cdot)}{q(\cdot \,|\,Y)}\right)},
\end{align*}
respectively, where the averages are taken with respect to i.i.d.\ random samples of $\theta\sim q(\cdot\,|\, Y)$ for a randomly chosen $Y$.

\begin{remark}
If there exist $p,q>2$ with $(p-2)(q-2)\geq 4$ such that 
\[ \EE_{Y}\EE_{\theta\sim q(\cdot\,|\, Y)}\left[\left|\frac{p(Y\,|\,\theta)p(\theta)}{p(Y)q(\theta\,|\, Y)}\right|^p\right] <\infty \quad \text{and}\quad \EE_{Y}\EE_{\theta\sim q(\cdot\,|\, Y)}\left[\left|\log \frac{p(Y\,|\,\theta)p(\theta)}{p(Y)q(\theta\,|\, Y)}\right|^q\right]<\infty , \]
respectively, a similar proof to that of Theorem~\ref{thm:main} goes through and we obtain
\[ \EE[|\hat{Z}_\ell|] =O(2^{-\min(\frac{p(q-1)}{2q}, 1)\ell})\quad \text{and}\quad \VV[\hat{Z}_{\ell}]=O(2^{- \min(\frac{p(q-2)}{2q}, 2) \ell}). \]
Hence the MLMC estimator with importance sampling still achieves the computational complexity of $O(\varepsilon^{-2})$ whenever $(p-2)(q-2)> 4$.
\end{remark}

The question is how to construct an importance distribution $q(\theta\,|\, Y)$ depending on each particular problem. The common guideline is to find a good approximation of the posterior distribution $p(\theta\,|\,Y)$. The Laplace approximation method, which has been recently studied in \cite{LSTW13,BDELT18} for estimating the expected information gain, is a method to approximate $p(\theta\,|\,Y)$ by a (multivariate) Gaussian distribution, When the data $Y$ is generated conditionally on the known value of $\theta=\theta^*$ \tg{from \eqref{eq:data_model}}, the Laplace method approximates $p(\theta\,|\, Y)$ by a Gaussian distribution $N(\hat{\theta}, \hat{\Sigma})$, for instance, with
\begin{align*}
\hat{\theta} & = \theta^* - \left( J(\theta^*)^{\top}\Sigma_\epsilon^{-1} J(\theta^*)+H^{\top}(\theta^*)\Sigma_\epsilon^{-1}E -\nabla_\theta\nabla_\theta \log(p(\theta^*)) \right)^{-1}J(\theta^*)^{\top}\Sigma_\epsilon^{-1} E, \\ 
\hat{\Sigma} & = \left( J(\hat{\theta})^{\top}\Sigma_\epsilon^{-1} J(\hat{\theta})-\nabla_\theta\nabla_\theta \log(p(\hat{\theta})) \right)^{-1}.
\end{align*}
Here we \tg{denote the Jacobian and Hessian of $-g$ by $J$ and $H$, respectively, that is, $J(\theta):=-\nabla_\theta g(\theta)$, $H(\theta):=-\nabla_\theta\nabla_\theta g(\theta)$, and moreover we write $E := Y-g(\theta^*)$}. We refer to \cite{LSTW13,BDELT18} for details. It is clear that we need to compute the first-order and second-order derivatives of $g$ with respect to $\theta$. \tg{Typically when} their analytical computations are not available, we may approximate them by finite differences.

\section{Numerical experiments}\label{sec:numer}
Two examples are presented here to demonstrate the efficiency of our MLMC estimator by comparing the numerical performance with that of the NMC estimator. In order to avoid arithmetic underflow, we always use the Laplace-based importance sampling within both the MLMC and the NMC estimators. The first example is a simple test case where the analytical value of $U$ is available, while the second one is based on a more realistic pharmacokinetic (PK) model adapted from \cite{RDTP14}. Throughout all the experiments, we set $M_0$ (the number of inner samples at level 0) to be 1. 

\subsection{Simple test case}
Let $\theta$ be a vector in $\RR^{\tg{d}}$ and consider the following linear data model:
\[ Y=A\theta +\epsilon, \]
where $A\in \RR^{\tg{w\times d}}$ and $Y, \epsilon \in \RR^{\tg{w}}$. We assume that the prior distribution of $\theta$ is given by the multivariate Gaussian distribution $N(\mu_\theta, \Sigma_\theta)$ and the noise $\epsilon$ follows $N(\bszero, \Sigma_\epsilon)$. Allowing to repeat experiments $N_e$ times, the expected information gain for this model can be evaluated analytically as
\[ U = \frac{1}{2}\log |N_e \Sigma_\epsilon^{-1}A\Sigma_\theta A^{\top}+I|, \]
where $I$ denotes the identity matrix of size \tg{$w\times w$}.

In what follows, we set $\tg{d=2, w=3}, \mu_\theta=(1,0)^{\top},$
\[ \Sigma_\theta =  \begin{bmatrix} 2 & -1 \\ -1 & 2 \\ \end{bmatrix},\quad  A =  \begin{bmatrix} 1 & 2 \\ 2 & 3 \\ 3 & 4 \\ \end{bmatrix}, \quad \text{and}\quad \Sigma_\epsilon =  \begin{bmatrix} 0.1 & -0.05 & 0 \\ -0.05 & 0.1 & -0.05 \\ 0 & -0.05 & 0.1 \\ \end{bmatrix}. \]
For this parameter setting, the analytical values of $U$ for the cases $N_e=1$ and $N_e=10$ are $4.4574$ and $6.6642$, respectively.

The numerical result for the case $N_e=1$ is shown in Fig.~\ref{fig:test1}. The left top plot shows the behaviors of the mean values of both $P_\ell$ and $Z_\ell$, where the means are estimated empirically by using $2\times 10^4$ random samples for each level. Note that the logarithm of the absolute mean value in base $2$ is plotted as a function of level. While the mean value of $P_\ell$ is almost constant, the absolute mean value of $Z_\ell$ decays geometrically fast as the level increases. The slope of the line for $Z_\ell$ is $-0.93$, which means $\alpha=0.93$ and is in good agreement with Theorem~\ref{thm:main}.

The right top plot shows the behaviors of the empirical variances of both $P_\ell$ and $Z_\ell$. Here we again plot the logarithm of the variance in base $2$ as a function of level. While the variance of $P_\ell$ is almost constant, the variance of $Z_\ell$ decays geometrically fast as the level increases. The slope of the line for $Z_\ell$ is $-1.64$, which means $\beta=1.64$ and again agrees well with Theorem~\ref{thm:main}. These two convergence results in conjunction with the fact $\gamma=1$ indicate that the MLMC estimator can achieve the computational complexity of $O(\varepsilon^{-2})$ for estimating $U$.

In order to confirm that this indication is indeed the case in practice, we run the following algorithm which is a slight modification from one described in \cite[Section~3.1]{G15}.
\begin{algorithm}\label{alg:MLMC}
Let $\omega\in (0,1)$ be a user-specified parameter. For a target root-mean-square accuracy $\varepsilon$, start with $L=L_0$ and give an initial number of samples $N_*$ for all the levels $\ell=0,\ldots,L$. Until extra samples need to be evaluated, repeat the following:
\begin{enumerate}
\item evaluate extra samples on each level.
\item compute (or update) the empirical variances $\hat{V}_\ell$ for $\ell=0,\ldots,L$. 
\item define optimal $N_\ell$ for $\ell=0,\ldots,L$ according to
\[ N_\ell = \left\lceil (1-\omega)^{-1}\varepsilon^{-2}\sqrt{\frac{\hat{V}_\ell}{C_\ell}}\sum_{\ell=0}^{L}\sqrt{\hat{V}_\ell C_\ell}\right\rceil. \]
\item test for the bias convergence $|\EE[Z_L]|/(2^{\alpha}-1) \leq \sqrt{\omega}\varepsilon$, where we use the empirical estimates for $\EE[Z_\ell]$ and $\alpha$.
\item if the bias is not converged, let $L=L+1$ and give an initial number of samples $N_L$. 
\end{enumerate}
\end{algorithm}

In this algorithm, the optimal allocation of $N_\ell$ given in Item~3 is derived by minimizing the total cost $\sum_{\ell=0}^{L}N_\ell C_\ell$ for a fixed variance $\sum_{\ell=0}^{L}V_\ell/N_\ell = (1-\omega)\varepsilon^2$. The bias convergence test in Item~4 comes from the assumption $\EE[Z_\ell]\propto 2^{-\alpha \ell}$, which leads to
\[ \EE[P-P_L] = \sum_{\ell=L+1}^{\infty}\EE[Z_\ell] = \frac{\EE[Z_L]}{2^{\alpha}-1}. \]
In this way, Algorithm~\ref{alg:MLMC} heuristically ensures that the mean square error \eqref{eq:MLMC_mse} of the MLMC estimator is bounded above by
\[ \EE[( Z_{\text{MLMC}}-\EE[P])^2] =\sum_{\ell=0}^{L}\frac{V_\ell}{N_\ell}+\left( \EE[P_L-P]\right)^2 \leq  (1-\omega)\varepsilon^2+\omega\varepsilon^2=\varepsilon^2. \]
In our experiments, we always put $\omega=0.25, L_0=2$ and $N_*=10^3$.

\begin{figure}[t]
\centering
\includegraphics[width=0.8\textwidth]{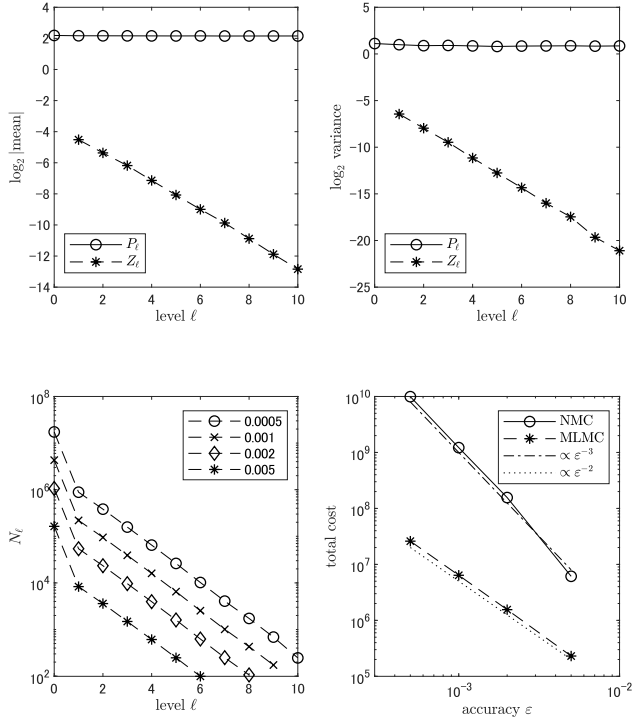}
\caption{Numerical results for the test case with $N_e=1$.}
\label{fig:test1}
\end{figure}
The left bottom plot of Fig.~\ref{fig:test1} shows the resulting allocation of $N_\ell$ from $\ell=0$ to the maximum level $\ell=L$ for different values of $\varepsilon$. We see that, as $\varepsilon$ decreases, the maximum level $L$ increases so as to satisfy the bias convergence. As expected, for any $\varepsilon$, $N_\ell$ decreases geometrically as the level increases, i.e., most of the samples are allocated on the coarser levels. The right bottom plot compares the total cost required for the MLMC estimator to have the root-mean-square accuracy less than $\varepsilon$ with that for the NMC estimator. Here the total cost for the NMC estimator is computed by
\[ C_L\times \frac{\hat{\VV}[P_L]}{(1-\omega)\varepsilon^2},\]
for the same maximum level $L$ with the MLMC estimator, so that the mean square error \eqref{eq:SMC_mse} of the NMC estimator is bounded above by $\varepsilon^2$. As the theoretical result predicted, we see that the total cost of the MLMC estimator is of $O(\varepsilon^{-2})$, whereas that of the NMC estimator is of $O(\varepsilon^{-3})$. For $\varepsilon=5\times 10^{-4}$, the MLMC estimator is more than 380 times more efficient. The estimated $U$ is $4.458$, which agrees quite well with the analytical value.

\begin{figure}[t]
\centering
\includegraphics[width=0.8\textwidth]{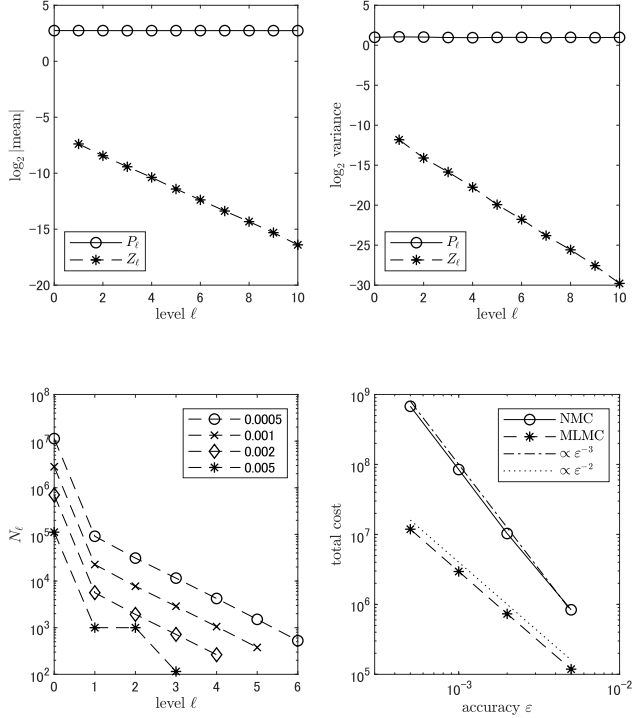}
\caption{Numerical results for the test case with $N_e=10$.}
\label{fig:test2}
\end{figure}
As shown in Fig.~\ref{fig:test2}, even for the case $N_e=10$, similar convergence behaviors of the mean value $|\EE[Z_\ell]|$ and the variance $\VV[Z_\ell]$ are observed. In this case, the estimated values of $\alpha$ and $\beta$ are $0.99$ and $1.97$, respectively. For $\varepsilon=5\times 10^{-4}$, the MLMC estimator achieves the computational saving of a factor more than $50$. The estimated $U$ is $6.664$, which again agrees well with the analytical value.

\subsection{Pharmacokinetic model}
Let us consider a more realistic example which is adapted from the PK model used in \cite[Example~3]{RDTP14}. Suppose that a drug is administrated to subjects. In order to reduce the uncertainty about a set of PK parameters, which affect the absorption, distribution and elimination of the drug in the subjects' body, it would be helpful to take blood samples of the subjects at several different times and to measure the concentration of drug in the samples.

In the data model \eqref{eq:data_model} considered in this paper, $\theta$ is a set of PK parameters, $\xi$ is a set of blood sampling times after the administration of the drug, denoted by $\xi=(t_1,\ldots,t_J)$, and $Y=(Y_1,\ldots,Y_J)$ is a vector of the measured drug concentration at times $t=t_1,\ldots,t_J$. Following \cite{RDTP14},  let $\theta=(k_a, k_e,V)\in \RR_{>0}^3$ with $k_a$ being the first-order absorption constant, $k_e$ the first-order elimination constant, $V$ the volume of distribution. The drug concentration at time $t_j$, where hour is used as a unit, is modeled as
\[ Y_j = g_{t_j}(k_a, k_e,V)+ \epsilon := \frac{D}{V}\frac{k_a}{k_e-k_a}\left( e^{-k_e t_j}- e^{-k_a t_j}\right) + \epsilon, \]
with the white noise $\epsilon\sim N(0,0.01)$ \tg{and with a single fixed dose $D=400$ administrated at the beginning of the experiment.} The difference from the original model in \cite{RDTP14} is that we remove one noise term whose variance depends on the value of $g_{t_j}$ for simplicity. The prior probability distributions of $k_a, k_e$ and $V$ are independent and given by $\log k_a\sim N(0, 0.05)$, $\log k_e\sim N(\log 0.1, 0.05)$ and $\log V\sim N(\log 20, 0.05)$, respectively. Regarding the experimental setup $\xi$, we follow \cite{RDTP14} and consider three different blood sampling schemes with all $J=15$:
\begin{enumerate}
\item (beta) $\xi_1$: Percentiles of the Beta$(0.7, 1.2)$ distribution, scaled to $[0, 24]$,
\item (even-spacing) $\xi_2$: $t_j=0.3+1.6 \times (j-1)$,
\item (geometric) $\xi_3$: $t_j=0.94\times 1.25^{j-1}$.
\end{enumerate}

\begin{figure}[t]
\centering
\includegraphics[width=0.8\textwidth]{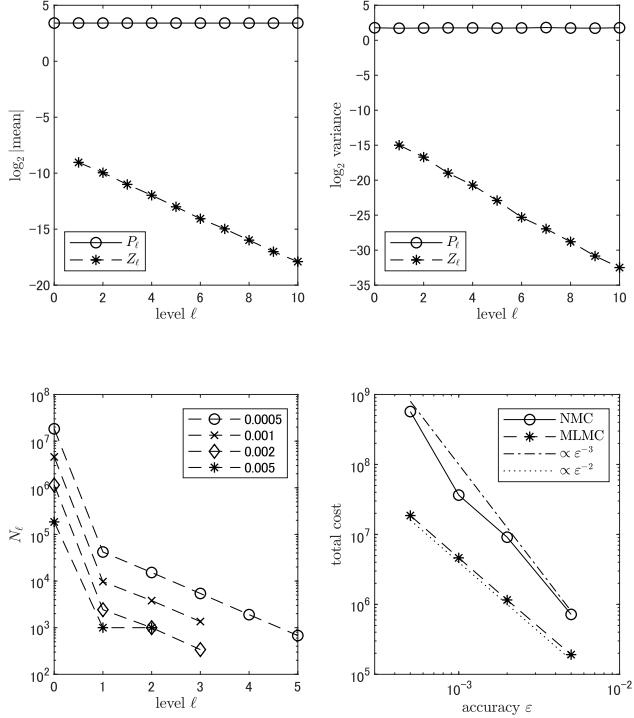}
\caption{Numerical results for the PK model with the beta-scheme sampling times.}
\label{fig:pk_beta}
\end{figure}

\begin{figure}[t]
\centering
\includegraphics[width=0.8\textwidth]{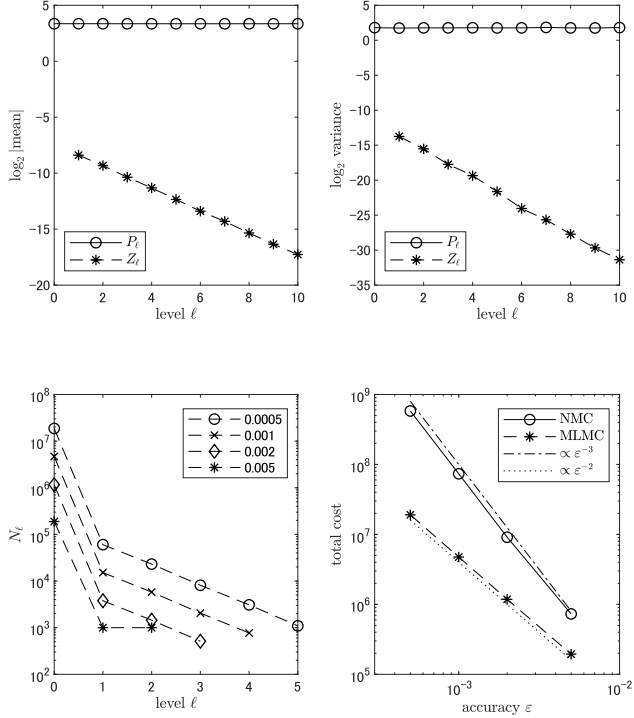}
\caption{Numerical results for the PK model with the even-spacing-scheme sampling times.}
\label{fig:pk_even}
\end{figure}

\begin{figure}[t]
\centering
\includegraphics[width=0.8\textwidth]{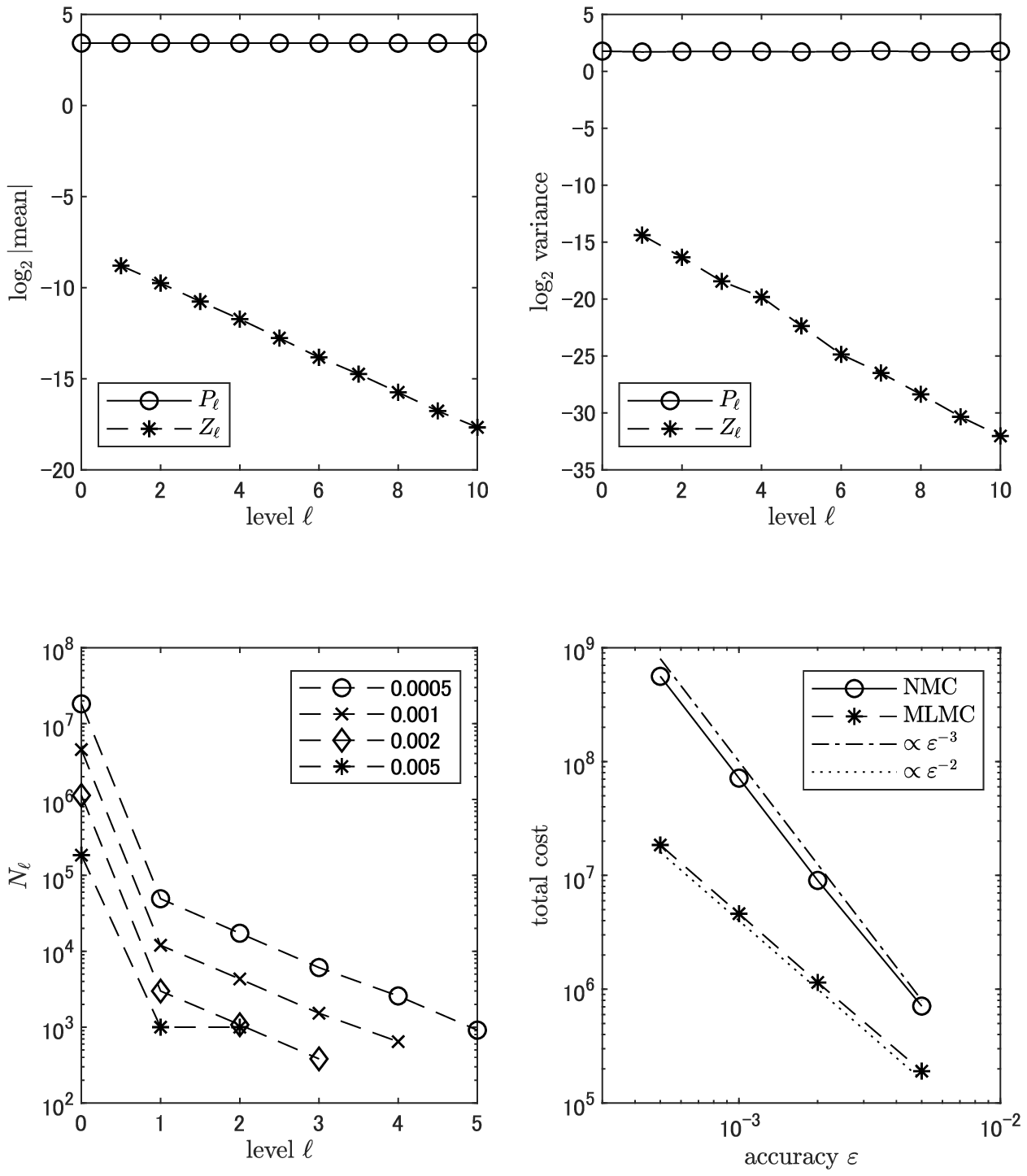}
\caption{Numerical results for the PK model with the geometric-scheme sampling times.}
\label{fig:pk_geom}
\end{figure}

Figs.~\ref{fig:pk_beta}--\ref{fig:pk_geom} show the MLMC numerical results for three respective experimental setups. For any setup, we can see the geometric decay of both $|\EE[Z_\ell]|$ and $\VV[Z_\ell]$, which confirms the tight coupling of the corrections $Z_\ell$. Similarly to the simple test case, the total cost for the MLMC estimator is of $O(\varepsilon^{-2})$, whereas that for the NMC estimator is of $O(\varepsilon^{-3})$. In Table~\ref{tbl:summary_PK}, we summarize these results. In our problem setting, the expected information gain for the geometric-scheme sampling $\xi_3$ is slightly larger than that for the beta-scheme sampling $\xi_1$, which itself is larger than that for the even-spacing-scheme sampling $\xi_2$. Thus $\xi_3$ is the best experimental setup among these three. There may exist a better experimental setup yielding a larger $U$, although such an investigation is the beyond the scope of this paper.

\begin{table}[h]
\caption{Summery of numerical results for the PK model}\label{tbl:summary_PK}\tg{
\begin{tabular}{c||c|c|c|c|c|c}
\hline
Sampling scheme & $\alpha$ & $\beta$ & MLMC cost & NMC cost & saving & $U$ \\ \hline
beta & 0.995 & 1.980 & $1.86\times 10^7$ & $5.70\times 10^8$ & 30.6 & 10.63 \\
even-spacing & 0.994 & 1.993 & $1.91\times 10^7$ & $5.81\times 10^8$ & 30.5 & 10.21 \\
geometric & 0.994 & 1.994 & $1.85\times 10^7$ & $5.60\times 10^8$ & 30.3 & 10.74 \\ \hline
\end{tabular}}
The MLMC cost, the NMC cost and the saving are the results for $\varepsilon=5\times 10^{-4}$.
\end{table}
\section{Conclusion}\label{sec:conclusion}
In this paper we have developed an MLMC estimator for the expected information gain, which is one of the most important quality criteria of Bayesian experimental designs. Under a sufficient condition on the data model, we prove that our MLMC estimator achieves the computational complexity of $O(\varepsilon^{-2})$, which compares favorably with that of the nested Monte Carlo estimator, which is $O(\varepsilon^{-3})$. Combining importance sampling techniques with the MLMC estimator is straightforward and is quite helpful not only in reducing the variance of the corrections $Z_\ell$ but also, as shown in \cite{BDELT18}, in mitigating the risk of arithmetic underflow. Numerical experiments support our theoretical result.

We leave the following issues open for future research:
\begin{itemize}
\item an extension to the situation where the function $g$ can only be evaluated approximately. As studied in \cite{BDELT18}, in some engineering applications, we have to deal with the situation where $g$ is a functional of the solution of partial differential equations and only approximate values of $g$ from finite difference or finite element approximations are available. Soon after completing the first version of this paper, an independent work by Beck et al.\ \cite{BDETxx} has introduced the MLMC estimator of the expected information gains for such situations.\footnote{The authors used the standard (non-antithetic) MLMC estimator and claimed that the property $\beta=2$ holds without a rigorous argument. However, this present work supports this claim theoretically if we use the antithetic MLMC estimator.} As a natural extension, a multi-index Monte Carlo method \cite{HNT16} can be considered to improve the computational efficiency.
\item the use of quasi-Monte Carlo (QMC) sampling instead of i.i.d.\ random sampling. The idea behind QMC sampling is that by distributing samples more uniformly or evenly over the domain, i.e., by generating ``low-discrepancy'' points or sequences, the rate of convergence for estimating expectations is to be improved. There are some works which combine QMC sampling with MLMC, see for instance \cite{GW09,GMTS18}. It is expected to achieve additional computational savings also in the current application.
\item a combination with an optimization algorithm to find optimal Bayesian experimental designs. The ultimate goal in this direction of research would be to efficiently construct optimal Bayesian experimental designs. In this paper, we only dealt with an estimation of the expected information gain for a given experimental setup. Combining the MLMC estimator with an optimization algorithm would be a promising approach to attain this goal.
\end{itemize}

\end{document}